\documentclass[12pt]{amsart}
\usepackage{SummerPreamble}

\begin{document}
 
 \title[Financial value of weak anticipation]{The financial value of knowing the distribution of stock prices in  discrete market models}
  \author[Amiran]{Ayelet Amiran{$^{\dag}$}}
    \address{$^{\dag}$University of Massachusetts, Amherst, MA 01003, U.S.A. }
    \email{ aamiran@umass.edu} 
\author[Baudoin]{Fabrice Baudoin{$^{\dag}$}}
    \address{$^{\dag}$University of Connecticut, Storrs, CT 06269, U.S.A. }
    \email{ fabrice.baudoin@uconn.edu} 
\author[Brock]{Skylyn Brock{$^{\dag}$}}
    \address{$^{\dag}$Colorado Mesa University, Grand Junction, CO 81501, U.S.A. }
    \email{ snbrock@mavs.coloradomesa.edu} 
\author[Coster]{Berend Coster{$^{\dag}$}}
    \address{$^{\dag}$University of Connecticut, Storrs, CT 06269, U.S.A. }
    \email{ berend.coster@uconn.edu} 
\author[Craver]{Ryan Craver{$^{\dag}$}}
    \address{$^{\dag}$University of Maryland, College Park, MD 20742, U.S.A. }
    \email{ rcraver@terpmail.umd.edu} 
\author[Ezeaka]{Ugonna Ezeaka{$^{\dag}$}}
    \address{$^{\dag}$University of Massachusetts, Amherst, MA 01003, U.S.A. }
    \email{ uezeaka@umass.edu} 
\author[Mariano]{Phanuel Mariano{$^{\dag}$}}
    \address{$^{\dag}$University of Connecticut, Storrs, CT 06269, U.S.A. }
    \email{ phanuel.mariano@uconn.edu}
\author[Wishart]{Mary Wishart{$^{\dag}$}}
    \address{$^{\dag}$Eastern Connecticut State University, Willimantic,CT 06226,U.S.A. }
    \email{ wishartmar@my.easternct.edu} 
\date{\today}{\vspace{-5ex}}
\maketitle 

\begin{abstract}
An explicit formula is derived for the value of weak information in a discrete time model that works for a wide range of utility functions including the logarithmic and power utility. We assume a complete market with a finite number of assets and a finite number of possible outcomes. Explicit calculations are performed for a binomial model with two assets. The case of trinomial models is also discussed.
\end{abstract} 
\tableofcontents

\setcounter{page}{1}


\section{Introduction}

Suppose an investor knows the distribution of the prices of the stocks in the market at a future time and this investor wants to optimize  her or his expected utility from wealth at that future time. Our basic question is: \textit{What is the financial value of this information ?}

Much of the research into utility optimization and the financial value of weak information has been looked at previously in a continuous time setting (\cite{modantic} and \cite{weakinfo}). 
The purpose of this paper is to investigate how to optimize a stock portfolio given weak information in a discrete time setting. 

Though trinomial models will be discussed at the end of the paper, in most of this work we will assume that the market is complete. We will also assume that there are no transactions costs. For a definition of complete markets see \cite{book}. The main tool we use in finding the optimal expected utility given the weak information on future stock prices is the martingale method (see \cite{Shreve}).

As with classical results in this field, we will be looking at the expected utility as opposed to the expected wealth. This is an important difference to note since utility functions allow us to include an individual's attitude towards risk. 

\

\textbf{Acknowledgments:} This research was funded by the  NSF grant DMS 1659643. The authors would like to thank Oleksii Mostovyi for several instructive discussions and comments on this work.

\section{Utility Functions}
There are many different utility functions used in mathematics and economics to measure an individuals happiness or satisfaction. We denote our utility functions by $U$. We require that a utility function is strictly concave, strictly increasing, and continuously differentiable. We assume as \cite{modantic} in that \begin{equation}\label{derivlim} \lim\limits_{x\to0}U'(x)=+\infty \text{ and } \lim\limits_{x\to\infty}U'(x)=0. \end{equation} 
These conditions are sufficient for a utility function to exhibit risk aversion, satisfy the law of diminishing marginal utility, and to guarantee that an increase in wealth results in an increase in utility. Further, when discussing the risk aversion of our utility functions, we use the absolute and relative risk aversion functions (see \cite{rra}). We will be looking specifically at three different types of utility functions.
\\\\The three types of utility functions we use are:

\

\begin{enumerate}

    \item\underline{Log Utility}: $U(x) = \ln(x), \quad x>0$  \\\\
    The log utility function has a constant relative risk aversion of 1. This implies the individual will always take on a constant proportion of risk with respect to their wealth. \\
    \item \underline{Power Utility}: $U(x) = \frac{x^\gamma}{\gamma},$\quad for $-\infty<\gamma<0$ and $0<\gamma<1$ and $x>0$\\\\
    The power utility function also has a constant relative risk aversion, but the constant value is $1-\gamma$. Thus, the power utility function is less risk averse compared to log utility function for $0<\gamma<1$. In this case, the constant $\gamma$ reflects the relative risk adversity with the individual becoming more risk adverse as $\gamma$ approaches $0$. If $-\infty<\gamma<0$ the individual is more risk-averse than an individual whose preferences can be described by the logarithmic utility function. As $\gamma$ approaches $-\infty$ the individual becomes more and more risk-averse.\\
    \item \underline{Exponential Utility}: $U(x) = - e^{-\a x},$ for $\a>0$ and $x\in\R$\\\\
    The exponential utility function has a constant absolute risk aversion of 1. Thus, the individual with an exponential utility function will assume a constant amount of risk rather than a constant proportion of risk with respect to their wealth. Notice that the exponential utility function does not satisfy the condition \eqref{derivlim}, but it is still an interesting function to note, and our results still hold true for this function.
\end{enumerate}


\section{Modelling Weak Anticipations on Discrete Time Complete Markets with a Discrete State Space}\label{2}

\subsection{Setup}

Suppose we have a market with $d$ financial assets, and a sample space 
$\Omega=\{\o_1,...,\o_M\}$ of possible outcomes of all the asset prices after one time period. For all probability measures $\P$, we always assume $\P(\o_j)>0, \forall \o_j\in\Omega$. This is not a restriction since if $\P(\o_j)=0$, then we exclude $\o$ from $\Omega$. Then let $N$ be our final time period. Let $\vec{S_n}\in\mathbb{R}^d$ denote the asset prices at time $n$ where $n\in\{0,1,...,N\}$. Further, let the random variable $V_n$ denote the value of the portfolio at time $n$. Denote the initial wealth of the investor $V_0$ by $v$. Without loss of generality we can assume one of the assets is a risk-free asset. We define $r$ to be the rate of return of the risk-free asset. We will denote by $\mathcal{M}$ the set of equivalent\footnotemark  probability measures under which discounted stock prices are martingales. Furthermore, we will assume our market is free from arbitrage. Thus, we can assume that the set $\mathcal{M}$ is nonempty. For a complete market, $\mathcal{M}$ is a singleton, say $\mathcal{M}=\{\tP\}$ where $\tP$ is the unique probability measure under which discounted stock prices are martingales (see \cite{book} for more details about arbitrage, completeness, and equivalent martingale measures). We denote $\Psi^v$ as the set of self-financing portfolios given initial wealth $v$. 






\subsection{Weak Anticipation}

Now suppose we have some weak anticipation (weak information) regarding the prices of assets at our final time period. That is to say we know the distribution of $\vec{S_N}$. We will denote this distribution by $\nu$. Further, let $\mathcal{A}$ be the (finite) set of possible asset prices at time $N$. Note $|\mathcal{A}|\leq M^N$. Let $\mathcal{B}$ denote the set of all possible asset price processes over $N$ periods.

\begin{definition}:
The probability measure $\P^\nu$ defined by
\[\P^\nu(\o):=\sum\limits_{\vec{x}\in\mathcal{A}} \tP(\o|\vec{S_N}=\vec{x})\nu(\vec{S_N}=\vec{x})\] is called the minimal probability measure associated with the weak information $\nu$, where $\tP\in\mathcal{M}$ is an (remember $\mathcal{M}$ is a singleton in a complete market) equivalent martingale measure.
\end{definition}
In the sense of the following proposition, $\P^\nu$ is minimal in the set of probability measures $\Q$ equivalent to $\P$ such that $\Q(\vec{S_N}=\vec{x})=\nu(\vec{S_N}=\vec{x})$ for all $\vec{x}\in\mathcal{A}$. We denote this set by $\cE^{\nu}$. 

\begin{proposition}\label{Prop5gen}
Let $\phi$ be a convex function. Then
\[\min_{\Q\in\cE^\nu}\tE\left[\phi\left(\frac{d\Q}{d\tP}\right)\right]=\tE\left[\phi\left(\frac{d\P^\nu}{d\tP}\right)\right],\] where $\frac{d\Q}{d\tP}$ denotes the Radon-Nikodym derivative of $\Q$ with respect to $\tP$.
\end{proposition}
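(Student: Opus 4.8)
The plan is to reduce the claim to a single application of the conditional Jensen inequality relative to the $\sigma$-algebra $\mathcal{G}:=\sigma(\vec{S_N})$ generated by the terminal price vector. The decisive observation is that, although the Radon--Nikodym density $d\Q/d\tP$ genuinely depends on $\Q\in\cE^\nu$, its $\tP$-conditional expectation given $\mathcal{G}$ does not: for every $\Q\in\cE^\nu$ it equals $d\P^\nu/d\tP$, which is itself $\mathcal{G}$-measurable. Granting this, convexity of $\phi$ finishes the argument at once and identifies $\P^\nu$ as the minimizer.

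To carry this out I would first check that $\P^\nu\in\cE^\nu$: summing the defining formula over $\o$ collapses the inner conditional probabilities and leaves $\sum_{\vec{x}\in\mathcal{A}}\nu(\vec{S_N}=\vec{x})=1$, so $\P^\nu$ is a probability measure; positivity of each atom, hence equivalence with $\P$, follows from $\tP(\o)>0$ and $\nu(\vec{S_N}=\vec{x})>0$ for $\vec{x}\in\mathcal{A}$; and restricting the sum to a fiber $\{\vec{S_N}=\vec{x}\}$ gives $\P^\nu(\vec{S_N}=\vec{x})=\nu(\vec{S_N}=\vec{x})$. This already yields the inequality ``$\le$''. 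Next comes the key computation: for $\o$ in the fiber $\{\vec{S_N}=\vec{x}\}$,
\[
\frac{d\P^\nu}{d\tP}(\o)=\frac{\tP(\o\mid\vec{S_N}=\vec{x})\,\nu(\vec{S_N}=\vec{x})}{\tP(\o)}=\frac{\nu(\vec{S_N}=\vec{x})}{\tP(\vec{S_N}=\vec{x})},
\]
so $d\P^\nu/d\tP$ is constant on each fiber, i.e.\ $\mathcal{G}$-measurable. For an arbitrary $\Q\in\cE^\nu$ with density $Z:=d\Q/d\tP$, the elementary formula for conditional expectation on atoms gives $\tE[Z\mid\mathcal{G}]=\Q(\vec{S_N}=\vec{x})/\tP(\vec{S_N}=\vec{x})=\nu(\vec{S_N}=\vec{x})/\tP(\vec{S_N}=\vec{x})$ on $\{\vec{S_N}=\vec{x}\}$, using the constraint that defines $\cE^\nu$; hence $\tE[\,d\Q/d\tP\mid\mathcal{G}\,]=d\P^\nu/d\tP$ for every $\Q\in\cE^\nu$.

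It then remains to combine the tower property with conditional Jensen: for each $\Q\in\cE^\nu$,
\[
\tE\!\left[\phi\!\left(\tfrac{d\Q}{d\tP}\right)\right]
=\tE\Big[\tE\big[\phi(d\Q/d\tP)\mid\mathcal{G}\big]\Big]
\ge\tE\Big[\phi\big(\tE[\,d\Q/d\tP\mid\mathcal{G}\,]\big)\Big]
=\tE\!\left[\phi\!\left(\tfrac{d\P^\nu}{d\tP}\right)\right],
\]
and taking the infimum over $\Q$ proves ``$\ge$''; since $d\P^\nu/d\tP$ is $\mathcal{G}$-measurable, Jensen is an equality at $\Q=\P^\nu$, so the bound is attained there. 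The proof has no serious obstacle: everything is finite-dimensional, and every conditioning is legitimate because each fiber $\{\vec{S_N}=\vec{x}\}$ with $\vec{x}\in\mathcal{A}$ has strictly positive $\tP$-probability by the standing assumption that all outcomes have positive probability. The one place that warrants a careful line is the verification $\P^\nu\in\cE^\nu$, ensuring the minimum is taken over a set that actually contains the candidate minimizer. If one prefers to sidestep measure-theoretic language altogether, the same fact is the elementary statement that, fiber by fiber, $(p_\o)\mapsto\sum_\o\tP(\o)\,\phi\!\left(p_\o/\tP(\o)\right)$ subject to the linear constraint $\sum_\o p_\o=\nu(\vec{S_N}=\vec{x})$ is minimized when $p_\o/\tP(\o)$ is constant in $\o$, i.e.\ precisely at the $\P^\nu$-weights.
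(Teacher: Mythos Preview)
Your argument is correct and is essentially the same as the paper's: both compute $\tE[\,d\Q/d\tP\mid \vec{S_N}=\vec{x}\,]=\nu(\vec{S_N}=\vec{x})/\tP(\vec{S_N}=\vec{x})$ and then apply conditional Jensen with respect to $\sigma(\vec{S_N})$, followed by the tower property. Your write-up is slightly more complete in that you explicitly verify $\P^\nu\in\cE^\nu$ and that $d\P^\nu/d\tP$ is $\sigma(\vec{S_N})$-measurable, which justifies that the infimum is actually a minimum attained at $\P^\nu$; the paper leaves this implicit.
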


\footnotetext{In our finite discrete sample space, by equivalent we simply mean $\forall i \in \{1,2,..,M\}, \Q(\omega_i)>0.$}

\begin{proof}
Let $\vec{x}\in\mathcal{A}$ and $\Q\in\cE^\nu$ be given. Then,
\[ \tE\lrb{\frac{d\Q}{d\tP}|S_N=\vec{x}}=\frac{\nu(S_N=\vec{x})}{\tP(S_N=\vec{x})}.\]
Let $\phi$ be a convex function. Then, from conditional Jensen's inequality, \[\phi\left(\frac{\nu(S_N=\vec{x})}{\tP(S_N=\vec{x})}\right) = \phi\left(\tE\lrb{\frac{d\Q}{d\tP}|S_N=\vec{x}}\right) \leq \tE\lrb{\phi\left(\frac{d\Q}{d\tP}\right)|S_N=\vec{x}}.\]
Taking the expected value on both sides, we get \[\tE\lrb{\phi\left(\frac{\nu(S_N)}{\tP(S_N)}\right)}=\tE\lrb{\phi\left(\frac{d\P^\nu}{d\tP}\right)}\leq \tE\lrb{\phi\left(\frac{d\Q}{d\tP}\right)}\] and the result is proved.
\end{proof}

\subsection{Value of Weak Information}

Since an insider's anticipation has a different final time distribution than an uninformed investor, it is natural to find a way to characterize the value of this information. Since we focused on  maximizing our utility of wealth rather than the monetary value of wealth, we will define our value accordingly. 

\begin{definition}: 
 The \textbf{financial value of weak information} is the lowest expected utility that can be gained from anticipation. We write \[u(v,\nu)=\min\limits_{\Q\in\mathcal{E}^\nu} \max\limits_{\psi\in\Psi^v} \E^\Q[U(V_N)].\]
\end{definition}

Our main theorem is the following:

\begin{theorem}\label{weakval2}
    The financial value of weak information in a complete market is  \[u(v,\nu)=\max_{\psi\in\Psi^v} \E^\nu[U(V_N)]=\E^\nu \left[U\left(I\left(\frac{\l (v)}{(1+r)^N} \frac{d\tilde\P}{d\P^\nu}\right)\right)\right],\] where $\l (v)$ is determined by \[\tE\left[\frac{1}{(1+r)^N}I\left(\frac{\l (v)}{(1+r)^N}\frac{d\tilde\P}{d\P^\nu}\right)\right]=v,\] where $\tP \in \mathcal{M}$ is the unique probability measure under which the prices are martingales. Moreover, the optimal wealth at time $n$ $\hat{V}_n$ is given by
    \[\hat{V}_n=\frac{1}{(1+r)^{N-n}}\sum\limits_{b\in\mathcal{B}} I\left(\frac{\lambda(v)}{(1+r)^N}\frac{d\tP}{d\P^\nu}(b)\right)\tP(b |\vec{S}_n), \text{ for } n\in\{0,1,...,N\}.\] 
    The optimal amount to purchase of the $i^{th}$ linearly independent asset is
    \[\delta_n^i=\sum\limits_{j=1}^M (D_{n+1}^{-1})_{i,j} \hat{V}_{n+1}(\o_j), \text{ for } n\in\{0,1,...,N-1\},\]
    where 
    \[D_{n+1}=\begin{bmatrix} 
        S_{n+1}^1(\o_1) & S_{n+1}^2(\o_1) &.&.&.& S_{n+1}^M(\o_1) \\
        S_{n+1}^1(\o_2) & S_{n+1}^2(\o_2) &.&.&.& S_{n+1}^M(\o_2) \\
        . &.&&&& . \\
        . &.&&&& . \\
        . &.&&&& . \\
        S_{n+1}^1(\o_M) & S_{n+1}^2(\o_M) &.&.&.& S_{n+1}^M(\o_M) \\
    \end{bmatrix},\] is the matrix of $M$ linearly independent asset prices at time $n+1$, $(D_{n+1}^{-1})_{i,j}$ represents the element $(i,j)$ of the matrix $D_{n+1}^{-1}$, and $\hat{V}_{n+1}$ comes from the above equation.
\end{theorem}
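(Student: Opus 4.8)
The plan is to combine Proposition~\ref{Prop5gen} with the classical martingale (duality) method for utility maximization in a complete market, applied to the minimal measure $\P^\nu$. The first step is to swap the order of $\min$ and $\max$ in the definition of $u(v,\nu)$: since the outer minimization is over the convex set $\cE^\nu$ and for each fixed $\Q$ the inner problem is a standard complete-market utility maximization, one shows the pair $(\P^\nu,\hat\psi)$ is a saddle point, so that $u(v,\nu)=\max_{\psi\in\Psi^v}\E^\nu[U(V_N)]$. The key observation making this work is that the optimal terminal wealth under $\Q$ has the form $I(y\,\tfrac{d\tP}{d\Q})$ up to the discount factor, and plugging this into $\E^\Q[U(V_N)]=\tE[\tfrac{d\Q}{d\tP}U(\cdots)]$ produces an expression to which Proposition~\ref{Prop5gen} applies with the convex function $\phi(z)=z\,U(I(y/z))$ (convexity of $\phi$ here is exactly the standard fact that $-\tilde U$, the convex conjugate bookkeeping, behaves well); this identifies $\P^\nu$ as the minimizer.

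Next I would solve the inner maximization $\max_{\psi\in\Psi^v}\E^\nu[U(V_N)]$ by the martingale method under the pricing measure $\tP$. In a complete market every $\mathcal{F}_N$-measurable terminal wealth $V_N\ge 0$ with $\tE[(1+r)^{-N}V_N]=v$ is attainable by some self-financing $\psi\in\Psi^v$, so the problem reduces to the static optimization $\max\{\,\E^\nu[U(V_N)] : \tE[(1+r)^{-N}V_N]=v\,\}$. Writing $\E^\nu[U(V_N)]=\tE[\tfrac{d\P^\nu}{d\tP}U(V_N)]$ (note $\P^\nu$ and $\tP$ agree on $\sigma(\vec S_N)$-measurable... actually one uses that $V_N$ depends only on the price path, and handles the Radon--Nikodym weight directly), a Lagrange multiplier argument with multiplier $\l(v)$ and pointwise maximization gives $\hat V_N=I\!\left(\tfrac{\l(v)}{(1+r)^N}\tfrac{d\tP}{d\P^\nu}\right)$, with $\l(v)$ pinned down by the budget constraint $\tE[(1+r)^{-N}\hat V_N]=v$. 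Strict concavity of $U$ and the Inada conditions \eqref{derivlim} guarantee that $I=(U')^{-1}$ is well-defined, continuous, strictly decreasing, and that $\l(v)$ exists and is unique; this yields the stated formula for $u(v,\nu)$.

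Finally I would recover the wealth process and the trading strategy. The optimal wealth at time $n$ is obtained by discounted conditional expectation under the martingale measure, $\hat V_n=(1+r)^{-(N-n)}\tE[\hat V_N\mid \vec S_n]$, which after expanding over price paths $b\in\mathcal{B}$ gives the displayed formula for $\hat V_n$. For the holdings $\delta_n^i$, one uses the self-financing and replication conditions one period at a time: at time $n$ the position $(\delta_n^i)$ chosen in the $M$ linearly independent assets must reproduce $\hat V_{n+1}(\o_j)$ in each of the $M$ states $\o_j$, i.e. $D_{n+1}\,\delta_n=\hat V_{n+1}$ as a linear system, whose unique solution is $\delta_n=D_{n+1}^{-1}\hat V_{n+1}$; completeness is precisely the invertibility of $D_{n+1}$. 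I expect the main obstacle to be the rigorous justification of the $\min$--$\max$ interchange — specifically verifying that $\phi(z)=zU(I(y/z))$ is convex and that the saddle-point/attainability argument goes through without integrability pathologies — together with the bookkeeping needed because $\tfrac{d\tP}{d\P^\nu}$ is constant on the fibers $\{\vec S_N=\vec x\}$ but the underlying paths in $\mathcal{B}$ are finer than $\mathcal{A}$; once that is in place, the Lagrangian step and the linear-algebra step are routine consequences of strict concavity and market completeness.
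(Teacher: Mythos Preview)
Your overall architecture matches the paper's: invoke Proposition~\ref{Prop5gen} to single out $\P^\nu$ as the minimizing measure, solve the inner maximization by the Lagrangian/martingale method, then read off $\hat V_n$ as a discounted $\tP$-conditional expectation and obtain $\delta_n$ by inverting the one-period price matrix $D_{n+1}$. The last two steps coincide with the paper's.

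The gap is exactly at the point you flag as the main obstacle. Your proposed convex function $\phi(z)=z\,U(I(y/z))$ cannot be fed directly into Proposition~\ref{Prop5gen}, because the scalar $y$ is the Lagrange multiplier for the budget constraint and therefore depends on $\Q$: it is determined implicitly by $\tE[(1+r)^{-N}I(y\,(1+r)^{-N}\,d\tP/d\Q)]=v$, so different $\Q$'s yield different $y$'s and hence different $\phi$'s. Proposition~\ref{Prop5gen} requires a single fixed convex function, so as written the step does not go through.

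The paper's remedy is to pass to the convex conjugate $\tilde U$ \emph{before} optimizing out the multiplier. For each $\Q$ one rewrites the Lagrangian as
\[
\max_{\psi\in\Psi^v}\E^\Q[U(V_N)]
=\min_{\lambda>0}\left\{\lambda v+\tE\!\left[\frac{d\Q}{d\tP}\,\tilde U\!\left(\frac{\lambda}{(1+r)^N}\frac{d\tP}{d\Q}\right)\right]\right\},
\]
so that $u(v,\nu)$ becomes a $\min_{\Q}\min_{\lambda}$ problem; swapping two minima is harmless. For each \emph{fixed} $\lambda$, the map $z\mapsto z\,\tilde U\!\big(\lambda/((1+r)^N z)\big)$ is the perspective of the convex function $\tilde U$, hence convex independently of $\Q$, and Proposition~\ref{Prop5gen} now applies cleanly to pick out $\P^\nu$. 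Minimizing afterwards in $\lambda$ recovers the budget constraint and the stated formula. Your parenthetical about ``the convex conjugate bookkeeping'' is pointing at this idea; once you replace $\phi(z)=zU(I(y/z))$ by the perspective of $\tilde U$ with $\lambda$ kept as a separate outer variable, the remainder of your outline goes through verbatim.
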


\begin{proof}

We will proceed by rewriting $\max\limits_{\psi\in\Psi^v} \E^\Q[U(V_N)]$. In order to do this, we need the convex conjugate $\tilde U$, discussed in \cite{Utilde}.
We form the Lagrangian for solving $\max\limits_{\psi\in\Psi^v} \E^\Q[U(V_N)]$ by \[\mathcal{L}(\lambda)=\E^\Q[U(V_N)]+\lambda\left[v-\E^\Q\left[\frac{d\tP}{d\Q}\frac{V_N}{(1+r)^N}\right]\right].\] 
Now using $\tilde U$, substituting in for $V_N$ from the martingale method (see appendix), and doing algebra, we can rewrite our Lagrangian as


\[ \mathcal{L}(\lambda)= \lambda v + \tE\left[\frac{d\Q}{d\tP}\tilde U \left(\frac{\lambda}{(1+r)^N}\frac{d\tP}{d\Q}\right)\right].\]
Thus, we deduce 
\begin{align*}
u(v,\nu)&=\min\limits_{\Q\in\cE^\nu}\min\limits_{\lambda>0} \left[\lambda v + \tE\left[\frac{d\Q}{d\tP}\tilde U \left(\frac{\lambda}{(1+r)^N}\frac{d\tP}{d\Q}\right)\right]\right] \\ 
&=\min\limits_{\lambda>0} \left[\lambda v + \min\limits_{\Q\in\cE^\nu} \tE\left[\frac{d\Q}{d\tP}\tilde U \left(\frac{\lambda}{(1+r)^N}\frac{d\tP}{d\Q}\right)\right]\right].
\end{align*}
Since the convexity of $\tilde U$ implies the function mapping $z \mapsto z\tilde U\left(\frac{\lambda}{(1+r)^Nz}\right)$ is convex, 
we can use Proposition \ref{Prop5gen} to get 
\[u(v,\nu) = \min\limits_{\lambda>0} \left[\lambda v + \tE\left[\frac{d\P^\nu}{d\tP} \tilde{U} \left(\frac{\lambda}{(1+r)^N} \frac{d\tP}{d\P^\nu}\right)\right]\right].\]
Taking the derivative now with respect to $\lambda$ and setting it equal to 0, we find 
\[v = \tE \left[\frac{1}{(1+r)^N} I \left(\frac{\lambda^*(v)}{(1+r)^N} \frac{d\tP}{d\P^\nu}\right)\right]\] where $\lambda^*(v)$ is the minimizer. Now, 
\[u(v,\nu)=\lambda^*(v) v + \tE\left[\frac{d\P^\nu}{d\tP} \tilde{U} \left(\frac{\lambda^*(v)}{(1+r)^N} \frac{d\tP}{d\P^\nu}\right)\right] =
\E^\nu \left[U\left(I \left(\frac{\lambda^*(v)}{(1+r)^N} \frac{d\tP}{d\P^\nu}\right)\right)\right].\]
Thus, we have shown the first part of the theorem. Now note that discounted optimal wealth process $\{\frac{\hat{V}_n}{(1+r)^n}\}_{0\leq n\leq N}$ is a martingale under $\tP$. (see appendix) As a result, \[\hat{V}_n=\frac{1}{(1+r)^{N-n}}\tE[\hat{V}_N|\vec{S}_n]=\frac{1}{(1+r)^{N-n}}\sum\limits_{b\in\mathcal{B}} I\left(\frac{\lambda(v)}{(1+r)^N}\frac{d\tP}{d\P^\nu}(b)\right)\tP(b |\vec{S}_n)\]
for all $n\in\{0,1,...,N\}$.
Further, note that wealth is determined by your portfolio from the previous time period and the current prices. Thus, \[\hat{V}_{n+1}=D_{n+1}\vec{\d}_n,\] so we have \[\D_{n+1}^{-1}\hat{V}_{n+1}=\vec{\d}_n.\]
\end{proof}

\noindent \textit{Remark.} We know from \cite{book} that the matrix of all asset prices in the complete market has rank $M$. Therefore, we can choose $M$ linearly independent assets to invest in. Further, note that the optimal amount to purchase for each asset is only unique when $M=d$. \\

\begin{definition}:
    We define the \textbf{additional value of weak information} as the extra utility gained from investing with anticipation instead of just putting all of your wealth in the risk-free asset, which we define by \[F(v,\nu)=u(v,\nu)-U(v(1+r)^N).\]
\end{definition}
\begin{definition}:
    We also define the \textbf{ratio of added value to the total value} by \[\pi(v,\nu)=\frac{F(v)}{u(v,\nu)}=1-\frac{U(v(1+r)^N)}{u(v,\nu)}\]

\end{definition}


\section{Complete Markets: The Binomial Model} 

\subsection{Single-Period Binomial Model}
We first will focus on a one-period binomial model with two assets: one risk-free asset with payoff $1+r$, and one risky asset with payoffs $S_0(1+h)$ if the stock goes up, and $S_0(1-k)$ if the stock goes down. In order to have an arbitrage-free market, we require $h>r>-k$. Since there is only one risky asset, we will denote the amount of units owned of the risky asset at time $n$ by $\delta_n$. 

Figure \ref{fig:oneperiodlog} below shows a basic single-period binomial using the log utility. It represents the amount of stock you should buy initially, $\delta_0$.  From here there are only two outcomes for our final time; the stock price will either go up or down.       
 
\begin{center}
\begin{tikzpicture}[>=stealth,sloped]
    \matrix (tree) [%
      matrix of nodes,
      minimum size=.25cm,
      column sep=2cm,
      row sep=.25cm]
      { 
                    & {$\nu_0 = 50\%$} \\
    {$\delta_0 = 12.21095$}&                    \\
                    &{$\nu_1 = 50\%$}\\  
                    $n=0$             & $n=1$       \\
    };
   \draw[-](tree-2-1) -- (tree-3-2); 
   \draw[-](tree-2-1) -- (tree-1-2);
   \draw[->](tree-4-1) --(tree-4-2);
  \end{tikzpicture}
  \end{center}
  \begin{capt}\label{fig:oneperiodlog} An example of a single-period binomial model using the log utility where where the parameter values are $r = .032, h = .09, k = .019, v = 200.0,$ and $ s = 20.0 $
  \end{capt} 
  \noindent\textbf{Example 1} {Log Utility}\\
  When looking at the specific utility functions, in the case of log, we begin by maximizing $\E[U(V_N)]$ with respect to $\d$. We then are able to obtain our equation for the optimal number of shares with respect to wealth, $\hat{\d}$, in a one period model.
\[\hat{\delta}_0 = \frac{v(1+r)(\nu_0(h-r)+\nu_1(-k-r))}{-s(h-r)(-k-r)}.\]

\noindent\textbf{Example 2} {Power Utility}\\
As in log utility we would solve for our optimal number of shares with respect to wealth, $\hat{\d}_0$, in a one period model.
\[\hat{\d}_0 =\frac{((\nu_0(h-r))^{\frac{1}{\g-1}}-(\nu_1(-k-r))^{\frac{1}{\g-1}})(1+r)v}{(\nu_1(-k-r))^{\frac{1}{\g-1}}s(-k-r)-(\nu_0(h-r))^{\frac{1}{\g-1}}s(h-r)}.\]
\textbf{Example 3} {Exponential Utility}\\
Similarly to the previously examined utilities we will solve for the optimal number of shares with respect to wealth, $\hat{\d}$, in a one period model for the exponential utility.
\[\hat{\delta}_0 = \frac{\ln{(\nu_0 (h-r))}-\ln{(-\nu_1 (-k-r))}}{s(h+k)}.\]

\subsection{N-Period Binomial Model}

In binomial models, everything can be explicitly computed.   For instance, the following theorem gives the formula for the transition probabilities of the minimal probability $\mathbb{P}^\nu$. It is easy to establish by using the formula for conditional probabilities and straightforward combinatorial arguments. We note that $S_N$ is a Markov chain under the probability $\mathbb{P}^\nu$.

\begin{theorem}
Let $l\in\{1,...,N-1\}$ and $i\in\{0,...,N-l\}$. Then

\begin{align*}
\P^\nu(S_{N-l+1}=(1+h)S_{N-l}\mid  &S_{N-l}=(1+h^{N-l-i}(1-k)^is)\\
&= \frac{\sum\limits_{j=0}^{l-1}{\binom{l-1} {j}}(N-i-j)\ldots (N-i-(l-1))(i+1)(i+2)\ldots (i+j)\nu_{i+j}}{\sum\limits_{j=0}^{l}{\binom{l} {j}}(N-i-j)\ldots (N-i-(l-1))(i+1)(i+2)\ldots (i+j)\nu_{i+j}}\\ \intertext{ and }
\P^\nu(S_{N-l+1}=(1-k)S_{N-l}\mid  &S_{N-l}=(1+h^{N-l-i}(1-k)^is)\\
&= \frac{\sum\limits_{j=0}^{l-1}{\binom{l-1} {j}} \ldots (N-i-(l-1))(i+1)\ldots (i+j+1)\nu_{i+j+1}}{\sum\limits_{j=0}^{l}{\binom{l} {j}}(N-i-j)\ldots (N-i-(l-1))(i+1)(i+2)\ldots (i+j)\nu_{i+j}}.\\
\end{align*}
\end{theorem}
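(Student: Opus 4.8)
The plan is to compute the minimal measure $\P^\nu$ explicitly on individual price paths, and then read off each transition probability directly from the definition of conditional probability, letting binomial coefficients count how many paths of each type occur.

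First I would record that under the unique martingale measure $\tP$ the one-step moves are i.i.d.\ with up-probability $\tilde p=(r+k)/(h+k)$, so that every price path with $i$ down-moves (and $N-i$ up-moves) has the same $\tP$-probability $\tilde p^{\,N-i}(1-\tilde p)^{i}$; hence $\tP(\,\cdot\mid S_N=(1+h)^{N-i}(1-k)^i s)$ is the uniform distribution on the $\binom{N}{i}$ paths reaching that final price. Substituting this into the definition of $\P^\nu$, for any path $b$ with $i$ down-moves one gets $\P^\nu(b)=\nu_i/\binom{N}{i}$, where $\nu_i:=\nu(S_N=(1+h)^{N-i}(1-k)^i s)$. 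The key structural fact I would extract is that $\P^\nu(b)$ depends on $b$ only through its total number of down-moves.

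Next, fix $l\in\{1,\dots,N-1\}$, $i\in\{0,\dots,N-l\}$, and a state at time $N-l$ having undergone $i$ down-moves, i.e.\ $S_{N-l}=(1+h)^{N-l-i}(1-k)^i s$. Conditioning on an arbitrary fixed prefix of length $N-l$ with $i$ down-moves, a completion with $j$ further down-moves ($0\le j\le l$) consists of $\binom{l}{j}$ full paths, each of weight $\nu_{i+j}/\binom{N}{i+j}$, so the prefix has total $\P^\nu$-weight $\sum_{j=0}^{l}\binom{l}{j}\nu_{i+j}/\binom{N}{i+j}$. Imposing in addition that the move at time $N-l+1$ is an up-move fixes one more coordinate and leaves $l-1$ free moves, giving joint weight $\sum_{j=0}^{l-1}\binom{l-1}{j}\nu_{i+j}/\binom{N}{i+j}$; for a down-move at time $N-l+1$ the only change is that the total down-count becomes $i+j+1$, so $\nu_{i+j}$ is replaced by $\nu_{i+j+1}$. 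Since none of these sums depends on the chosen prefix beyond its down-count $i$, this argument simultaneously establishes that $S_N$ is Markov under $\P^\nu$ and expresses the transition probabilities as ratios of the above sums.

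The only remaining step is algebraic tidying of these ratios into the displayed form: substitute $1/\binom{N}{i+j}=(i+j)!\,(N-i-j)!/N!$, cancel the common factor $N!$, and pull the common factor $i!\,(N-i-l)!$ out of every term, using $(i+j)!/i!=(i+1)(i+2)\cdots(i+j)$ and $(N-i-j)!/(N-i-l)!=(N-i-j)(N-i-j-1)\cdots(N-i-(l-1))$; for the down-transition one also shifts the summation index by one so that $\nu_{i+j+1}$ appears. I expect the main (and only minor) obstacle to be this bookkeeping — tracking exactly which factorials survive the cancellation and checking the boundary indices $j=0$ and $j\in\{l-1,l\}$, where the displayed products degenerate to empty products — since there is no analytic content beyond elementary counting.
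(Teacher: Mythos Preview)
Your proposal is correct and follows precisely the route the paper indicates: the paper does not give a detailed proof but only remarks that the formula ``is easy to establish by using the formula for conditional probabilities and straightforward combinatorial arguments,'' and your plan carries out exactly that computation, including the observation that $\P^\nu(b)=\nu_i/\binom{N}{i}$ depends only on the down-count and the subsequent path-counting and factorial bookkeeping.
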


    \begin{capt}
    $\P^\nu$ for a 3-period binomial model
\end{capt}
\begin{center}
  \begin{tikzpicture}[>=stealth,sloped]
    \matrix (tree) [%
      matrix of nodes,
      minimum size=.01cm,
      column sep=2cm,
      row sep=.5cm]
      { $n=0$             & $n=1$                & $n=2$    &$n=3$ \\
                    &                   &       &$i=0$ \\
                    &                   &$\bullet$  & \\
                    & $\bullet$  &       &$i=1$\\
    $\bullet$&                   &$\bullet$  & \\
                    &$\bullet$   &       &$i=2$\\
                    &                   &$\bullet$  & \\
                    &                   &       &$i=3$\\
    };
    \draw[->] (tree-5-1) -- (tree-4-2) node [midway,above] {$\frac{3\nu_0+2\nu_1+\nu_2}{3\nu_0+3\nu_1+3\nu_2+3\nu_3}$};
    \draw[->] (tree-5-1) -- (tree-6-2) node [midway,above] {$\frac{\nu_1+2\nu_2+3\nu_3}{3\nu_0+3\nu_1+3\nu_2+3\nu_3}$};
    \draw[->] (tree-4-2) -- (tree-3-3) node [midway,above] {$\frac{3\nu_0+\nu_1}{3\nu_0+2\nu_1+\nu_2}$};
    \draw[->] (tree-4-2) -- (tree-5-3) node [midway,above] {$\frac{\nu_1+\nu_2}{3\nu_0+2\nu_1+\nu_2}$};
    \draw[->] (tree-6-2) -- (tree-5-3) node [midway,above] {$\frac{\nu_1+\nu_2}{\nu_1+2\nu_2+3\nu_3}$};
    \draw[->] (tree-6-2) -- (tree-7-3) node [midway,above] {$\frac{\nu_2+3\nu_3}{\nu_1+2\nu_2+3\nu_3}$};
    \draw[->] (tree-3-3) -- (tree-2-4) node [midway,above] {$\frac{3\nu_0}{3\nu_0+\nu_1}$};
    \draw[->] (tree-3-3) -- (tree-4-4) node [midway,above] {$\frac{\nu_1}{3\nu_0+\nu_1}$};
    \draw[->] (tree-5-3) -- (tree-4-4) node [midway,above] {$\frac{\nu_1}{\nu_1+\nu_2}$};
    \draw[->] (tree-5-3) -- (tree-6-4) node [midway,above] {$\frac{\nu_2}{\nu_1+\nu_2}$};
    \draw[->] (tree-7-3) -- (tree-6-4) node [midway,above] {$\frac{\nu_2}{\nu_2+3\nu_3}$};
    \draw[->] (tree-7-3) -- (tree-8-4) node [midway,above] {$\frac{3\nu_3}{\nu_2+3\nu_3}$};
  \end{tikzpicture}    
\end{center}

\begin{minipage} {\textwidth}
\begin{capt}
 $\P^\nu$ for a 3-period binomial model for a specific choice of $\nu$
\end{capt}
\begin{center}
  \begin{tikzpicture}[>=stealth,sloped,font=\scriptsize]
    \matrix (tree) [%
      matrix of nodes,
      minimum size=0.1cm,
      column sep=2.4cm,
      row sep=0.2cm
    ]
    {
    n=0             &n=1         &n=2        &n=3 \\
    \hspace{0.1cm}  &            &           &{$\nu_0=1/4$} \\
                    &            &{$\cdot$}  & \\
                    & {$\cdot$}  &           &{$\nu_1=1/2$}\\
    {$\cdot$}&                   &{$\cdot$}  & \\
                    &{$\cdot$}   &           &{$\nu_2=1/8$}\\
                    &            &{$\cdot$}  & \\
                    &            &           &{$\nu_3=1/8$}\\
    };
    \draw[->] (tree-5-1) -- (tree-4-2) node [midway,above] {$15/24$};
    \draw[->] (tree-5-1) -- (tree-6-2) node [midway,above] {$9/24$};
    \draw[->] (tree-4-2) -- (tree-3-3) node [midway,above] {$2/3$};
    \draw[->] (tree-4-2) -- (tree-5-3) node [midway,above] {$1/3$};
    \draw[->] (tree-6-2) -- (tree-5-3) node [midway,above] {$5/9$};
    \draw[->] (tree-6-2) -- (tree-7-3) node [midway,above] {$4/9$};
    \draw[->] (tree-3-3) -- (tree-2-4) node [midway,above] {$3/5$};
    \draw[->] (tree-3-3) -- (tree-4-4) node [midway,above] {$2/5$};
    \draw[->] (tree-5-3) -- (tree-4-4) node [midway,above] {$4/5$};
    \draw[->] (tree-5-3) -- (tree-6-4) node [midway,above] {$1/5$};
    \draw[->] (tree-7-3) -- (tree-6-4) node [midway,above] {$1/4$};
    \draw[->] (tree-7-3) -- (tree-8-4) node [midway,above] {$3/4$};
  \end{tikzpicture}
\end{center}
\end{minipage}

\newpage
\noindent\textbf{Example 1} (Log Utility)\\ 
\begin{center}
\begin{tikzpicture}[>=stealth,sloped,font=\scriptsize]
    \matrix (tree) [
      matrix of nodes,
      minimum size=.1cm,
      column sep=.1cm,
      row sep=.1cm]
      { 
                    &                   &       &{$S_3 = 25.90058$} \\
                    &                   &{$S_2 = 23.762$}  & \\
                    & {$S_1 = 21$}  &      &{$S_3 = 23.31052$}\\
    {$S_0 = 20$}&                   &{$S_2 = 21.3858$}  & \\
                    &{$S_1 = 19.62$}   &   &{$S_3 = 20.97947$}\\
                    &                   &{$S_3 = 19.24722$}  & \\
                    &                   &       &{$S_3=18.88152$}\\
                    $n=0$             & $n=1$                & $n=2$    &$n=3$ \\
    };
    \draw[-] (tree-4-1) -- (tree-3-2);
    \draw[-] (tree-4-1) -- (tree-5-2);
    \draw[-] (tree-3-2) -- (tree-2-3);
    \draw[-] (tree-3-2) -- (tree-4-3);
    \draw[-] (tree-5-2) -- (tree-4-3);
    \draw[-] (tree-5-2) -- (tree-6-3);
    \draw[-] (tree-2-3) -- (tree-1-4);
    \draw[-] (tree-2-3) -- (tree-3-4);
    \draw[-] (tree-4-3) -- (tree-3-4);
    \draw[-] (tree-4-3) -- (tree-5-4);
    \draw[-] (tree-6-3) -- (tree-5-4);
    \draw[-] (tree-6-3) -- (tree-7-4);
    \draw[->] (tree-8-1) -- (tree-8-2);
    \draw[->] (tree-8-2) -- (tree-8-3);
    \draw[->] (tree-8-3) -- (tree-8-4);
  \end{tikzpicture}
  \end{center}
  \begin{capt} A 3-period binomial tree showing the values of $S_n$ where the parameters are $r = .032, h = .09, k = .019, v = 200.0,$ and $ s = 20.0 $
  \end{capt} 
 \begin{multicols}{2}
\begin{center}
\begin{tikzpicture}[>=stealth,sloped,font=\scriptsize]
    \matrix (tree) [
      matrix of nodes,
      minimum size=.1cm,
      column sep=.1cm,
      row sep=.2cm]
      { 
                    &                   &       &{$\nu_0 = 25\%$} \\
                    &                   &{$\delta_2 = 146.4281$}  & \\
                    & {$\delta_1 = 76.48093$}  &      &{$\nu_1=25\%$}\\
    {$\delta_0 = 12.21095$}&                   &{$\delta_2 = 8.141736$}  & \\
                    &{$\delta_1 = -50.58155$}   &   &{$\nu_2=25\%$}\\
                    &                   &{$\delta_2=-107.9549$}  & \\
                    &                   &       &{$\nu_3=25\%$}\\
                    $n=0$             & $n=1$                & $n=2$    &$n=3$ \\
    };
    \draw[-] (tree-4-1) -- (tree-3-2);
    \draw[-] (tree-4-1) -- (tree-5-2);
    \draw[-] (tree-3-2) -- (tree-2-3);
    \draw[-] (tree-3-2) -- (tree-4-3);
    \draw[-] (tree-5-2) -- (tree-4-3);
    \draw[-] (tree-5-2) -- (tree-6-3);
    \draw[-] (tree-2-3) -- (tree-1-4);
    \draw[-] (tree-2-3) -- (tree-3-4);
    \draw[-] (tree-4-3) -- (tree-3-4);
    \draw[-] (tree-4-3) -- (tree-5-4);
    \draw[-] (tree-6-3) -- (tree-5-4);
    \draw[-] (tree-6-3) -- (tree-7-4);
    \draw[->] (tree-8-1) -- (tree-8-2);
    \draw[->] (tree-8-2) -- (tree-8-3);
    \draw[->] (tree-8-3) -- (tree-8-4);
  \end{tikzpicture}\columnbreak
  \begin{tikzpicture}[>=stealth,sloped,font=\scriptsize]
    \matrix (tree) [%
      matrix of nodes,
     minimum size=.1cm,
      column sep=.1cm,
      row sep=.2cm]
      { 
                    &                   &       &{$\nu = 20\%$} \\
                    &                   &{$\delta_2 = 251.9051$}  & \\
                    & {$\delta_1 = 192.0971$}  &      &{$\nu=40\%$}\\
    {$\delta_0 = 96.13333$}&                   &{$\delta_2=112.1887$}  & \\
                    &{$\delta_1 = -32.0822$}   &   &{$\nu=30\%$}\\
                    &                   &{$\delta_2=-224.84$}  & \\
                    &                   &       &{$\nu=10\%$}\\
                    $n=0$             & $n=1$                & $n=2$    &$n=3$ \\
    };
    \draw[-] (tree-4-1) -- (tree-3-2);
    \draw[-] (tree-4-1) -- (tree-5-2);
    \draw[-] (tree-3-2) -- (tree-2-3);
    \draw[-] (tree-3-2) -- (tree-4-3);
    \draw[-] (tree-5-2) -- (tree-4-3);
    \draw[-] (tree-5-2) -- (tree-6-3);
    \draw[-] (tree-2-3) -- (tree-1-4);
    \draw[-] (tree-2-3) -- (tree-3-4);
    \draw[-] (tree-4-3) -- (tree-3-4);
    \draw[-] (tree-4-3) -- (tree-5-4);
    \draw[-] (tree-6-3) -- (tree-5-4);
    \draw[-] (tree-6-3) -- (tree-7-4);
    \draw[->] (tree-8-1) -- (tree-8-2);
    \draw[->] (tree-8-2) -- (tree-8-3);
    \draw[->] (tree-8-3) -- (tree-8-4);
  \end{tikzpicture}
  \end{center}

   \end{multicols}
   \begin{capt}\label{fig:twologtrees} 3-period binomial trees showing the values of $\delta$ for various anticipations of $\nu$ using the log utility where the parameters are $r = .032$, $h = .09$, $k = .019$, $v = 200.0,$ and $ s = 20.0 $
  \end{capt}  
  
Figure \ref{fig:twologtrees} shows an example of two different 3-period binomial trees with set values. The first tree shows the values of $\delta$ at time $n$ when the anticipation has a uniform distribution. The second tree, however, shows an optimistic anticipation example. One can see how the amount of stocks one should invest changes depending on the distribution of the anticipation. For example, one would want to buy more stocks in an optimistic model because there is a better chance of the stock increasing in price as time goes on than in the model where all of the probabilities are the same. We allow for short-selling, so our $\delta$ can take negative values.

Further looking into the logarithmic utility function we can use Theorem 2.2 to find the financial value of weak information for the log utility function:
 We first solve for $\lambda$.
 \begin{align*}
  v &= \tE\left[\frac{1}{(1+r)^{N}} \cdot I\left(\frac{\lambda}{(1+r)^{N}} \frac{d\tilde{\mathbb{P}}}{d{\mathbb{P}}^{\nu}} \right)\right]\\
 v &= \tE\left[\frac{1}{(1+r)^{N}} \cdot\frac{(1+r)^{N}}{\lambda} \cdot \frac{d{\mathbb{P}}^{\nu}}{d\tilde{\mathbb{P}}}\right]\\
 \lambda &= \frac{1}{v}.\\
\end{align*}
Substituting for $\lambda$ into our value of weak information equation we thus have,
\begin{align*}
    u(v,\nu)&=\E^\nu \left[U \left(I\left( \frac{\lambda}{(1+r)^{N}} \cdot \frac{d\tilde{\mathbb{P}}}{d{\mathbb{P}}^{\nu}} \right) \right)\right]\\
    &= \E^\nu \left[\ln \left(\frac{(1+r)^{N}}{{\frac{1}{v}}} \cdot \frac{d\P^\nu}{d\tP} \right) \right]\\
    &= \ln \left( v(1+r)^{N}\right) + \E^{\nu} \left[ \ln\pfrac{d\P^\nu}{d\tP} \right]
\end{align*}
The additional value for log utility is \[F(v,\nu)=\E^{\nu} \left[ \ln\left(\frac{d{\mathbb{P}}^{\nu}}{d\tilde{\mathbb{P}}}\right) \right],\]
and the proportion is \[\pi(v,\nu)=\frac{\E^{\nu} \left[ \ln\left(\frac{d{\mathbb{P}}^{\nu}}{d\tilde{\mathbb{P}}}\right) \right]}{\ln \left( v(1+r)^{N}\right) + \E^{\nu} \left[ \ln\left(\frac{d{\mathbb{P}}^{\nu}}{d\tilde{\mathbb{P}}}\right) \right]}.\]

\noindent \textit{Remark.} Since we are only working at time $N$, we can write 
\[F(v,\nu)= \E^{\nu} \left[\ln\pfrac{d\nu}{d\tP_{\vec{S}_N}} \right].\] Notice this is the relative entropy of $\nu$ with respect to $\tP_{\vec{S}_N}$. \\

Note that $F(v,\nu)$ is only a function of $\nu$, so for any fixed $\nu$, we have that $F(v,\nu)$ is constant. Furthermore, $\pi(v,\nu)$ is a decreasing function of $v$ for any fixed $\nu$. As a result, the wealthier you are, the less proportion of utility you are gaining as a result of anticipation. 
In an 5-period binomial model, with the four anticipations below, we can look at the above functions as functions of $v$.
\begin{itemize}
    \item Precise: $\{0.01,0.01,0.01,0.95,0.01,0.01\}$
    \item Uniform Distribution: $\{1/6,1/6,1/6,1/6,1/6,1/6\}$
    \item Conservative: $\{0.1,0.2,0.2,0.2,0.2,0.1\}$
    \item Risk-Neutral: $\nu = \tP$
\end{itemize}

\begin{center}
    \includegraphics[scale = .27]{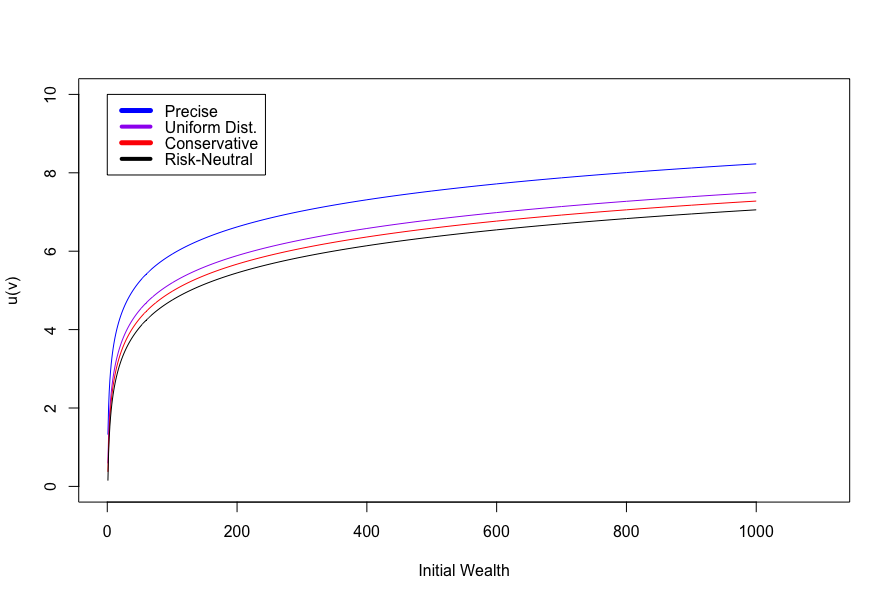}    
\end{center}
\begin{capt}
    Value of Weak Info. given $r=3\%$, $h=8\%$, $k=4\%$
\end{capt}
\begin{center}
    \includegraphics[scale = .27]{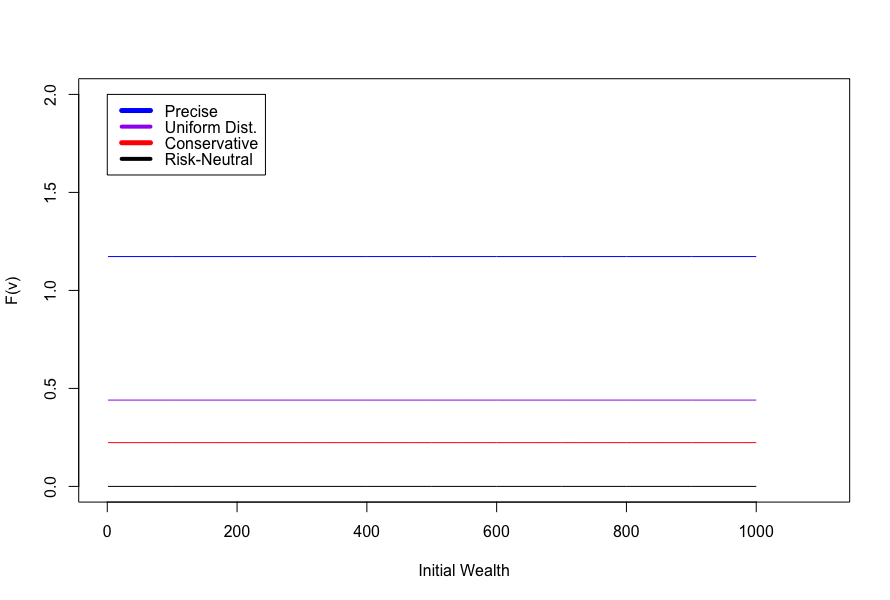}    
\end{center}
\begin{capt}
    Additional Value of Weak Info. given $r=3\%$, $h=8\%$, $k=4\%$
\end{capt}
\begin{center}
    \includegraphics[scale = .27]{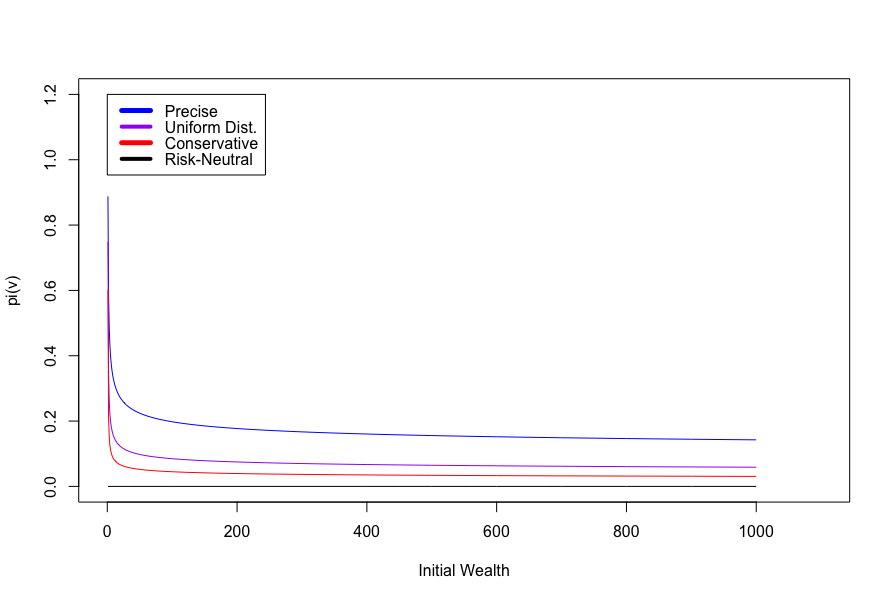}    
\end{center}
\begin{capt}
    Proportion of Value Added $r=3\%$, $h=8\%$, $k=4\%$
\end{capt}
\newpage
\textbf{Example 2} (Power Utility)\\ 
 
\begin{center}
\begin{tikzpicture}[>=stealth,sloped,font=\scriptsize]
    \matrix (tree) [
      matrix of nodes,
      minimum size=.1cm,
      column sep=.1cm,
      row sep=.1cm]
      { 
                    &                   &       &{$S_3 = 25.90058$} \\
                    &                   &{$S_2 = 23.762$}  & \\
                    & {$S_1 = 21$}  &      &{$S_3 = 23.31052$}\\
    {$S_0 = 20$}&                   &{$S_2 = 21.3858$}  & \\
                    &{$S_1 = 19.62$}   &   &{$S_3 = 20.97947$}\\
                    &                   &{$S_3 = 19.24722$}  & \\
                    &                   &       &{$S_3=18.88152$}\\
                    $n=0$             & $n=1$                & $n=2$    &$n=3$ \\
    };
    \draw[-] (tree-4-1) -- (tree-3-2);
    \draw[-] (tree-4-1) -- (tree-5-2);
    \draw[-] (tree-3-2) -- (tree-2-3);
    \draw[-] (tree-3-2) -- (tree-4-3);
    \draw[-] (tree-5-2) -- (tree-4-3);
    \draw[-] (tree-5-2) -- (tree-6-3);
    \draw[-] (tree-2-3) -- (tree-1-4);
    \draw[-] (tree-2-3) -- (tree-3-4);
    \draw[-] (tree-4-3) -- (tree-3-4);
    \draw[-] (tree-4-3) -- (tree-5-4);
    \draw[-] (tree-6-3) -- (tree-5-4);
    \draw[-] (tree-6-3) -- (tree-7-4);
    \draw[->] (tree-8-1) -- (tree-8-2);
    \draw[->] (tree-8-2) -- (tree-8-3);
    \draw[->] (tree-8-3) -- (tree-8-4);
  \end{tikzpicture}
  \end{center}
  \begin{capt} A 3-period binomial tree showing the values of $s$ where the parameters are $r = .032, h = .09, k = .019, v = 200.0,$ and $ s = 20.0 $
  \end{capt} 
 \begin{multicols}{2}
\begin{center}
\begin{tikzpicture}[>=stealth,sloped,font=\scriptsize]
    \matrix (tree) [
      matrix of nodes,
      minimum size=.1cm,
      column sep=.1cm,
      row sep=.1cm]
      { 
                    &                   &       &{$\nu_0 = 25\%$} \\
                    &                   &{$\delta_2 = 146.4281$}  & \\
                    & {$\delta_1 = 76.48093$}  &      &{$\nu_1=25\%$}\\
    {$\delta_0 = 12.21095$}&                   &{$\delta_2 = 8.141736$}  & \\
                    &{$\delta_1 = -50.58155$}   &   &{$\nu_2=25\%$}\\
                    &                   &{$\delta_2=-107.9549$}  & \\
                    &                   &       &{$\nu_3=25\%$}\\
                    $n=0$             & $n=1$                & $n=2$    &$n=3$ \\
    };
    \draw[-] (tree-4-1) -- (tree-3-2);
    \draw[-] (tree-4-1) -- (tree-5-2);
    \draw[-] (tree-3-2) -- (tree-2-3);
    \draw[-] (tree-3-2) -- (tree-4-3);
    \draw[-] (tree-5-2) -- (tree-4-3);
    \draw[-] (tree-5-2) -- (tree-6-3);
    \draw[-] (tree-2-3) -- (tree-1-4);
    \draw[-] (tree-2-3) -- (tree-3-4);
    \draw[-] (tree-4-3) -- (tree-3-4);
    \draw[-] (tree-4-3) -- (tree-5-4);
    \draw[-] (tree-6-3) -- (tree-5-4);
    \draw[-] (tree-6-3) -- (tree-7-4);
    \draw[->] (tree-8-1) -- (tree-8-2);
    \draw[->] (tree-8-2) -- (tree-8-3);
    \draw[->] (tree-8-3) -- (tree-8-4);
  \end{tikzpicture}
  \begin{capt}\label{fig:logvpower} Log Utility\end{capt}\columnbreak
  \begin{tikzpicture}[>=stealth,sloped,font=\scriptsize]
    \matrix (tree) [
      matrix of nodes,
      minimum size=.1cm,
      column sep=.1cm,
      row sep=.1cm]
      { 
                    &                   &       &{$\nu_0 = 25\%$} \\
                    &                   &{$\delta_2 = 1198.038$}  & \\
                    & {$\delta_1 = 445.6094$}  &      &{$\nu_1=25\%$}\\
    {$\delta_0 = 155.1425$}&                   &{$\delta_2 = 60..08356$}  & \\
                    &{$\delta_1 = 5.909925$}   &   &{$\nu_2=25\%$}\\
                    &                   &{$\delta_2 = -22.47464$}  & \\
                    &                   &       &{$\nu_3=25\%$}\\
                    $n=0$             & $n=1$                & $n=2$    &$n=3$ \\
    };
    \draw[-] (tree-4-1) -- (tree-3-2);
    \draw[-] (tree-4-1) -- (tree-5-2);
    \draw[-] (tree-3-2) -- (tree-2-3);
    \draw[-] (tree-3-2) -- (tree-4-3);
    \draw[-] (tree-5-2) -- (tree-4-3);
    \draw[-] (tree-5-2) -- (tree-6-3);
    \draw[-] (tree-2-3) -- (tree-1-4);
    \draw[-] (tree-2-3) -- (tree-3-4);
    \draw[-] (tree-4-3) -- (tree-3-4);
    \draw[-] (tree-4-3) -- (tree-5-4);
    \draw[-] (tree-6-3) -- (tree-5-4);
    \draw[-] (tree-6-3) -- (tree-7-4);
    \draw[->] (tree-8-1) -- (tree-8-2);
    \draw[->] (tree-8-2) -- (tree-8-3);
    \draw[->] (tree-8-3) -- (tree-8-4);
  \end{tikzpicture}
  \begin{capt}\label{fig:logvspower} Power Utility\end{capt}\end{center}
  \end{multicols}\begin{center}
  Two different 3-period binomial trees showing the values of $\delta$ for equal anticipations of $\nu$ using different utility where the constants are the same as Figure \ref{fig:twologtrees}. In the power utility model the value of $\gamma = .5$
  \end{center}
  
Figures \ref{fig:logvpower} and \ref{fig:logvspower} show the difference between the log and power utilities. As the log utility is a more relatively risk averse utility function (for $\gamma=0.5$), the absolute value of $\delta$ tends to be smaller when compared to the power utility function.  

Just like the log utility, we can also find the financial value of weak information for the power utility.  We now will solve for the value of $\lambda$.
  \[  \tE\left[\frac{1}{(1+r)^{N}} \cdot \left( \frac{\lambda}{(1+r)^{N}} \cdot \frac{d\tilde{\P}}{d\P^{\nu}}\right)^{\frac{1}{\gamma - 1}}\right]=v\\ \]
  \[  \lambda = \left(\frac{v(1+r)^{\frac{N\gamma}{\gamma-1}}}{\tilde{\E}\left[\pfrac{d\tilde{\P}}{d\P^{\nu}}^{\frac{1}{\gamma-1}} \right]} \right)^{\gamma-1}.\]
Substituting in for $\l$ we get,
\begin{align*}
    u(v,\nu)&=\E^\nu \left[U \left(I\left( \frac{\lambda}{(1+r)^{N}} \cdot \frac{d\tilde{\mathbb{P}}}{d{\mathbb{P}}^{\nu}} \right) \right)\right]\\
    &=\E^\nu\lrb{\frac{1}{\g}\lrp{\left(\frac{v(1+r)^{\frac{N\gamma}{\gamma-1}}}{\tilde{\E}\left[\pfrac{d\tilde{\P}}{d\P^{\nu}}^{\frac{1}{\gamma-1}} \right]} \right)^{\gamma-1} \cdot \frac{1}{(1+r)^N}\cdot\frac{d\tP}{d\P^\nu}}^{\frac{\g}{\g-1}}}\\
    &=\frac{v^{\gamma}(1+r)^{N\gamma}}{\gamma \left(\tilde{\E}\left[\left(\frac{d\tilde{\P}}{d\P^{\nu}}\right)^{\frac{1}{\gamma-1}}\right] \right)^{\gamma-1}} \cdot \E^{\nu}\left[\left(\frac{d\tilde{\P}}{d\P^{\nu}} \right)^{\frac{\gamma}{\gamma-1}}\right].
 \end{align*}
The additional value for power utility is \[F(v,\nu)= \frac{v^{\gamma}(1+r)^{N\gamma}}{\gamma \left(\tilde{\E}\left[\left(\frac{d\tilde{\P}}{d\P^{\nu}}\right)^{\frac{1}{\gamma-1}}\right] \right)^{\gamma-1}} \cdot \E^{\nu}\left[\left(\frac{d\tilde{\P}}{d\P^{\nu}} \right)^{\frac{\gamma}{\gamma-1}}\right]- \frac{v^\gamma(1+r)^{N\gamma}}{\gamma},\] 
and the proportion is \[\pi(v,\nu)=1-\frac{1}{\E^{\nu}\left[\left(\frac{d\tilde{\P}}{d\P^{\nu}} \right)^{\frac{\gamma}{\gamma-1}}\right] \cdot \left(\tilde{\E}\left[\left(\frac{d\tilde{\P}}{d\P^{\nu}}\right)^{\frac{1}{\gamma-1}}\right] \right)^{1-\gamma} }.\]


For power utility, we have the opposite relationship for a fixed $\nu$ with the proportion remaining constant  and the added value being an increasing function of initial wealth. 

\begin{center}
    \includegraphics[scale = .27]{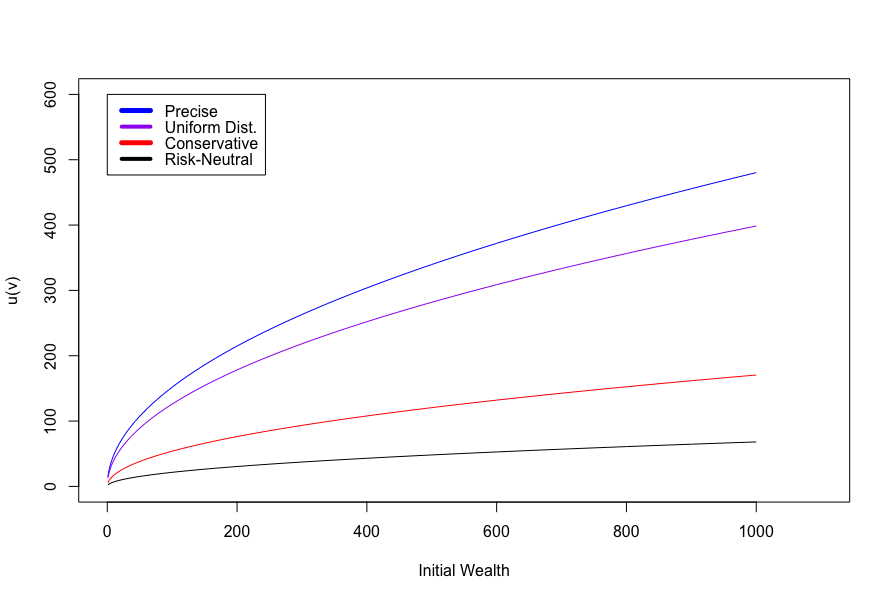}    
\end{center}
\begin{capt}
    Value of Weak Info. given $r=3\%$, $h=8\%$, $k=4\%$
\end{capt}
\begin{center}
    \includegraphics[scale = .27]{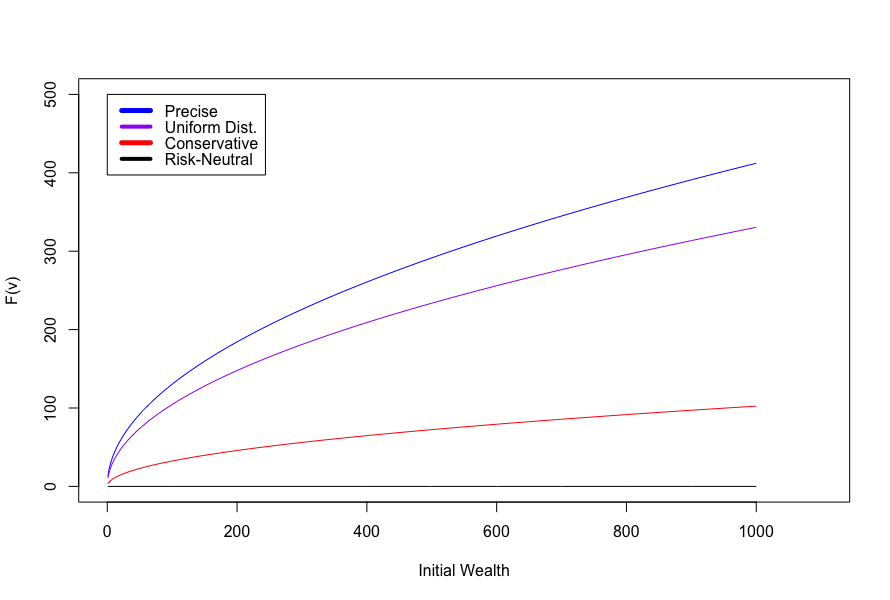}    
\end{center}
\begin{capt}
    Additional Value of Weak Info. given $r=3\%$, $h=8\%$, $k=4\%$
\end{capt}
\begin{center}
    \includegraphics[scale = .27]{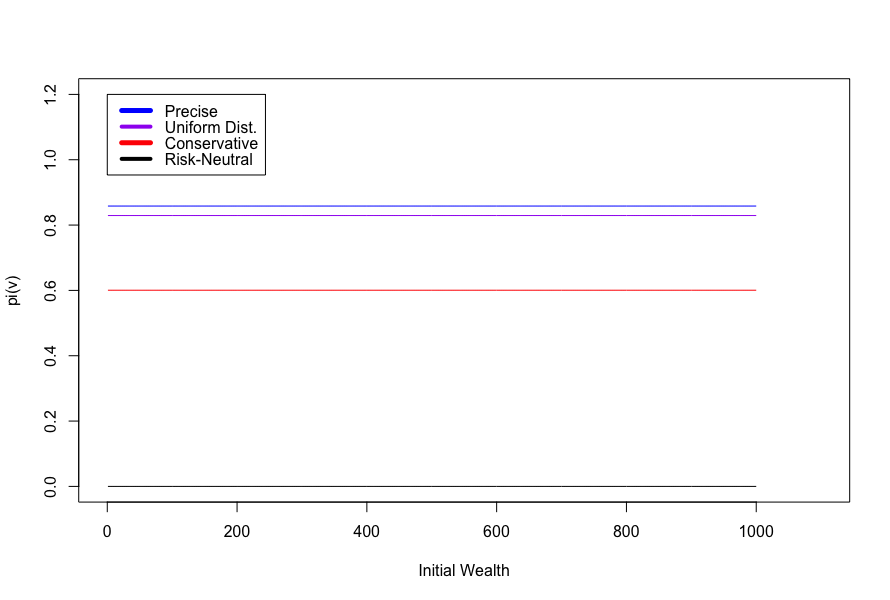}    
\end{center}
\begin{capt}
    Proportion of Value Added $r=3\%$, $h=8\%$, $k=4\%$
\end{capt}
\textbf{Example 3} (Exponential Utility)\\

\noindent We can also find the financial value of weak information for exponential utility.
\[\E^\nu\lrb{-e^{-a\hV_N}}=e^{-v\a(1+r)^N-\sum\limits_{i=0}^N \binom{N}{i} \tp^{N-i}\tq^i\ln \lrp{\binom{N}{i} \frac{\tp^{n-i}\tq^i}{\nu_i}}}.\]
We begin by solving for $\l$.
\[ \tE\lrb{\frac{1}{(1+r)^N}\cdot I\lrp{\frac{\l}{(1+r)^N}\frac{d\tP}{d\P^\nu}}}=v\\ \]
    We use this equation and then plug in for $I$.
    \[\tE\lrb{\frac{1}{(1+r)^N}\cdot\frac{-1}{\a}\cdot\ln\lrp{\frac{\l}{\a(1+r)^N}\frac{d\tP}{d\P^\nu}}}=v \]
We then solve for $\l$ to be:
   \[\l=\a(1+r)^Ne^{-v\a(1+r)^N-\E^\nu[\frac{d\tP}{d\P^\nu}\ln \pfrac{d\tP}{d\P^\nu}]}.\]
Finally we can use our $I$ and our $\l$ to plug in to our equation for the financial value of weak information to solve for the value as it specifically relates to exponential utility.
\begin{align*}
u(v,\nu)&=\E^\nu \left[U \left(I\left( \frac{\lambda}{(1+r)^{N}} \cdot \frac{d\tilde{\mathbb{P}}}{d{\mathbb{P}}^{\nu}} \right) \right)\right]\\
   &=\E^\nu\lrb{-e^{-a\frac{-1}{\a}\ln\lrp{\frac{\l }{\a(1+r)^N}\frac{d\tP}{d\P^\nu}}}}\\
    &=e^{-v\a(1+r)^N-\sum\limits_{i=0}^N \binom{N}{i} \tp^{N-i}\tq^i\ln \lrp{\binom{N}{i} \frac{\tp^{n-i}\tq^i}{\nu_i}}}.
\end{align*}




\section{Incomplete Markets: The Trinomial Model }

  \begin{center}\begin{tikzpicture}[>=stealth,sloped,font=\scriptsize]
    \matrix (tree) [%
      matrix of nodes,
      minimum size=.25cm,
      column sep=3cm,
      row sep=.35cm,
    ]
    {

         &       & $a^2s$\\
         &       &\\
         &$as$  &$abs$    \\
         &       &$acs$\\
    $s$  &$bs$  &$b^2s$ \ \footnote{}\\
         &       &$acs$\\
         &$cs$  &$bcs$     \\
         &       &\\
         &       &$c^2s$ \\
        $n=0$ &$n=1$  &$n=2$ \\
    };
    \draw[-] (tree-5-1) -- (tree-3-2) ;
    \draw[-] (tree-5-1) -- (tree-5-2) ;
    \draw[-] (tree-5-1) -- (tree-7-2) ;
    \draw[-] (tree-3-2) -- (tree-1-3) ;
    \draw[-] (tree-3-2) -- (tree-3-3) ;
    \draw[-] (tree-3-2) -- (tree-4-3) ;
    \draw[-] (tree-5-2) -- (tree-3-3) ;
    \draw[-] (tree-5-2) -- (tree-5-3) ;
    \draw[-] (tree-5-2) -- (tree-7-3) ;
    \draw[-] (tree-7-2) -- (tree-6-3) ;
    \draw[-] (tree-7-2) -- (tree-7-3) ;
    \draw[-] (tree-7-2) -- (tree-9-3);
    \draw[->] (tree-10-1) -- (tree-10-2);
    \draw[->] (tree-10-2) -- (tree-10-3);
  \end{tikzpicture}  \end{center}
  
\footnotetext{The value of $b^2s$ is not necessarily above or below $acs$, and the placement depends on the values of $a,b$, and $c.$}

For the general case of the trinomial model, we let our model be $N$ periods with sample space $\Omega=\{\o_1,\o_2,\o_3\}$ denoting taking the upper, middle, and bottom paths respectively. Let $i$ be the number of times the upper path with payoff $a$ is taken, and $j$ be the number of times the middle path with payoff $b$ is taken. Then $i=0,\ldots,N$ and $j=0,\ldots,N-i$. For any particular endpoint such that our stock went up $i$ times to get there, there are $N-i+1$ possible options for $j$. Since each unique endpoint is uniquely determined in terms of $i$ and $j$, and $i=0,\dots,N$, then our total number of possible endpoints is equal to \[\sum\limits_{i=1}^{N+1} i.\]

It is important to note the differences between the trinomial model, or for that matter any incomplete model, compared to the binomial model. To start, $\mathcal{M}$ is no longer a singleton. More specifically, \[\mathcal{M}=\{\tP: \tP(\o_1)(a-c)+\tP(\o_2)(b-c)=1+r-c, \tP(\o_3) = 1 - \tP(\o_1)-\tP(\o_2)\}, \forall i \in \{1,2,3\} 0<\tP(\o_i)<1.\] Knowing that we do not have a unique martingale measure $\tP$, we can not have complete replication, so we can not use the martingale method in the same way we did for the complete  market. The following is a method that accounts for our lack of a unique martingale measure $\tP$.

\subsection{Optimizing Utility}\label{5.1}
Let \[\mathcal{M}_n=\{\tP_n:\tP_n \text{ is an equivalent martingale measure for the $n^{th}$ period}\}.\] Note, since we are looking at martingale measures, $\tP_n$, the period and the paths taken in previous periods do not effect $\tP_n$. Let $\tP_{n,0}$ be the extremal measure such that for $b<1+r$ \[\tP_{n,0}(\o) = \begin{cases} 
\frac{(1+r)-b}{a-b}, &\o=\o_1 \\
\frac{a-(1+r)}{a-b}, &\o=\o_2 \\
0, &\o=\o_3
\end{cases}.\] For $b\geq1+r$ \[\tP_{n,0}(\o) = \begin{cases}
0, &\o=\o_1\\
\frac{\r-c}{b-c}, &\o=\o_2\\
\frac{b-\r}{b-c}, &\o=\o_3
\end{cases}. \] Next let $\tP_{n,1}(\o)$ be the extremal measure such that \[\tP_{n,1}(\o)=\begin{cases}
\frac{\r-c}{a-c}, &\o=\o_1\\
0, &\o=\o_2\\
\frac{a-\r}{a-c}, &\o=\o_3
\end{cases}.\] See \cite{IlamV}. Note that $\tP_{n,0},\tP_{n,1}\notin \mathcal{M}_n$. However, $\forall \tP_n\in\mathcal{M}_n$, $\tP_n$ is a convex combination \[t_n\tP_{n,0}+(1-t_n)\tP_{n,1}\] of $\tP_{n,0}\tnd\tP_{n,1}$ with $t_n\in(0,1)$ for all $n$. Note, $t_n$ may depend on $n$ as well as what occurred in previous periods. Then let $b\in\mathcal{B}$. We continue with \[\tP(b)=\prod\li_{i=0}^{N-1}(t_i\tP_{i,0}+(1-t_i)\tP_{i,1}).\] It follows for $N=2$ that \[\tP(b)=t_1t_2\tP_{0,0}\tP_{1,0}+(1-t_1)t_2\tP_{0,1}\tP_{1,0}+t_1(1-t_2)\tP_{0,0}\tP_{1,1}+(1-t_1)(1-t_2)\tP_{0,1}\tP_{1,1}.\] We then define 
   \[ \tP^1:=\tP_{0,0}\tP_{1,0} \qquad\tP^2:=\tP_{0,1}\tP_{1,0}  \qquad \tP^3:=\tP_{0,0}\tP_{1,1} \qquad\tP^4:=\tP_{0,1}\tP_{1,1}.\]
We continue for the general $N$-period model with the obvious extension. Thus, $\tP(b)$ is a convex combination of $\tP^j$ for $j\in\{1,2,...,2^N$\}.\\
With respect to anticipation our Radon-Nikodym derivatives are defined as $ \frac{d\tP^j}{d\nu}(\o)$ for $j\in\{1,2,...,2^N\}.$ Notice, since $\tP$ is a convex combination of $\tP^j$ for $j\in\{1,2,...,2^N\}$, \[\tE\lrb{\frac{V_N}{(1+r)^N}}=v \qquad \forall \tP \iff \E^{\tP^j}\lrb{\frac{V_N}{(1+r)^N}}=v \qquad \forall j\in\{1,2,...,2^N\}.\]
We can then solve the Lagrangian (see \cite{IlamV}) for $\hat{V}_N$, we find
\[\hat{V}_N(b)= I\lrp{\sum_{j=1}^{2^N}\frac{\l_j}{(1+r)^N}\frac{d\tP^j}{d\nu}(b)},\]
where the $\l_j$'s satisfy \[v=\E^\nu\lrb{\frac{1}{(1+r)^N}\frac{d\tP^i}{d\nu}I\lrp{\sum_{j=1}^{2^N}\frac{\l_j}{(1+r)^N}\frac{d\tP^j}{d\nu}}} \text { for each } i\in\{1,2,...,2^N\}.\] By \cite{Pliska} we know that our $\l_j$'s are unique due to the concavity of $U(x)$.

\noindent\textbf{Example 1} (Log Utility)\\
When optimizing our trinomial model from $V_N$ we plug in for $I$ given our specific utility function.
\begin{align*}
    \hat{V}_N(b)&= \frac{(1+r)^N}{\sum\li_{j=1}^{2^N}\l_j\frac{\tP^j(b)}{\nu(b)}},
\end{align*}
where the $\l_j$'s satisfy
\[v=\E^\nu\lrb{\frac{\tP^i(b)}{\sum\li_{j=1}^{2^N}\l_j\tP^j(b)}} \text { for each } i\in\{1,2,...,2^N\}.\]
\textbf{Example 2} (Power Utility)\\
Similarly in power utility we use our function to plug in for $I$.
\begin{align*}
    \hat{V}_N(b)&=\lrp{\sum_{j=1}^{2^N} \frac{\l_j}{(1+r)^N}\frac{\tP^j(b)}{\nu(b)}}^{\frac{1}{\g-1}},
\end{align*}
where the $\l_j$'s satisfy 
\[v=\E^\nu\lrb{\frac{1}{(1+r)^N}\frac{d\tP^i}{d\nu}\lrp{\sum_{j=1}^{2^N}\frac{\l_j}{(1+r)^N}\frac{d\tP^j}{d\nu}}^{\frac{1}{\g-1}}} \text { for each } i\in\{1,2,...,2^N\}.\]
\textbf{Example 3} (Exponential Utility)\\
In Exponential utility our steps would be similar to the previous functions. We plug in for $I$.
\[\hat{V}_N(b)=-\frac{1}{\a}\ln\lrp{\frac{-1}{\r^N\a}\sum\li_{j=1}^{2^N}\l_j\frac{d\tP^j}{d\nu}(b)},\]
where the $\l_j$'s satisfy
\[v=\E^\nu\lrb{-\frac{1}{\r^N\a}\frac{d\tP^i}{d\nu}\ln\lrp{\frac{-1}{\r^N\a}\sum\li_{j=1}^{2^N}\l_j\frac{d\tP^j}{d\nu}}}\text { for each } i\in\{1,2,...,2^N\}.\]
\subsection{Finding an Optimal Portfolio}
Understanding that incomplete markets cannot be replicated, our approach to finding an optimal portfolio needs to change accordingly. Recall from \ref{5.1} \[\tE\lrb{\frac{V_N}{(1+r)^N}}=v \qquad \forall \tP \iff \E^{\tP^j}\lrb{\frac{V_N}{(1+r)^N}}=v \qquad \forall j\in\{1,2,...,2^N\}.\] Thus, given this constraint, if $\E^{\tP^j}\lrb{\frac{V_N}{(1+r)^N}}=v, \forall j\in\{1,2,...,2^N\},$ our $\hat{V}_n$ can be replicated by a self-financing portfolio. Given this fact, we can determine our optimal portfolio strategy, $\dseq {N-1}$:
\[\hd_n=\frac{\hV_{n+1}(\o_i)-\hV_{n+1}(\o_j)}{S_{n+1}(\o_i)-S_{n+1}(\o_j)}, \text{ for } i,j\in\{1,2,3\}, i\neq j, n \in \{0,1,...,N-1\}.\] Fixing $\tP\in\cM$, we can find \[\hV_n = \frac{1}{(1+r)^{N-n}} \cdot \tE\left[I\left(\sum\limits_{j=1}^{2^N}\frac{\l_j}{(1+r)^N}\frac{d\tP^j}{d\nu}\right)|S_n\right],\] to plug in to our expression solving for $\hd$. 
For further explanation see \cite{IlamV}.

\section{Appendix}

The following is with respect to the general discrete case in a complete market. As in Section \ref{2}, we denote $\Psi^v$ as the set of self-financing portfolios given initial wealth $v$. 
\begin{theorem}\label{Vmg}
    The discounted wealth process is a martingale under the martingale measure $\Q$.
\end{theorem}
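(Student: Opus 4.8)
The plan is to derive the martingale property directly from two ingredients: the self-financing property of a portfolio $\psi\in\Psi^v$, and the defining feature of the martingale measure $\Q$, namely that for each asset $i$ the discounted price $S_n^i/(1+r)^n$ is a $\Q$-martingale with respect to the filtration $(\mathcal{F}_n)_{0\le n\le N}$ generated by the asset prices (this is the information on which we condition when we write $\E^\Q[\,\cdot\mid\vec S_n]$). Write the vector of units held over the period $(n-1,n]$ as $\delta_{n-1}=(\delta_{n-1}^0,\ldots,\delta_{n-1}^d)$, so that $\delta_{n-1}$ is $\mathcal{F}_{n-1}$-measurable, the time-$n$ wealth is $V_n=\sum_{i}\delta_{n-1}^i S_n^i$, and the self-financing condition at time $n-1$ reads $V_{n-1}=\sum_i\delta_{n-1}^i S_{n-1}^i$.

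First I would rewrite the discounted wealth as $V_n/(1+r)^n=\sum_i \delta_{n-1}^i\, S_n^i/(1+r)^n$. Then, conditioning on $\mathcal{F}_{n-1}$ and using that each $\delta_{n-1}^i$ is $\mathcal{F}_{n-1}$-measurable, linearity of conditional expectation gives $\E^\Q[\,V_n/(1+r)^n \mid \mathcal{F}_{n-1}\,]=\sum_i \delta_{n-1}^i\, \E^\Q[\,S_n^i/(1+r)^n \mid \mathcal{F}_{n-1}\,]$. Next I would apply the martingale property of the discounted asset prices under $\Q$, i.e. $\E^\Q[\,S_n^i/(1+r)^n\mid\mathcal{F}_{n-1}\,]=S_{n-1}^i/(1+r)^{n-1}$, to get $\E^\Q[\,V_n/(1+r)^n\mid\mathcal{F}_{n-1}\,]=\sum_i\delta_{n-1}^i S_{n-1}^i/(1+r)^{n-1}$. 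Dividing the self-financing identity $V_{n-1}=\sum_i\delta_{n-1}^i S_{n-1}^i$ by $(1+r)^{n-1}$ identifies the right-hand side as $V_{n-1}/(1+r)^{n-1}$, which is exactly the martingale identity $\E^\Q[\,V_n/(1+r)^n\mid\mathcal{F}_{n-1}\,]=V_{n-1}/(1+r)^{n-1}$. Integrability is immediate since $\Omega$ is finite, so every random variable in sight is bounded.

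There is no genuine obstacle here — the argument is essentially bookkeeping — but the one point that needs care is fixing the timing convention for the portfolio (which vector $\delta$ is held across which period) so that the self-financing condition is used in precisely the form that lets the predictable holdings be pulled out of the conditional expectation before the discounted-price martingale property is applied. Making that convention consistent with the earlier use of $\hat V_n=(1+r)^{-(N-n)}\,\E^\Q[\,\hat V_N\mid\vec S_n\,]$ is where one must be attentive. I would also remark in passing that completeness of the market is not used anywhere in this step: the conclusion holds for any probability measure $\Q$ under which the discounted asset prices are martingales, which is why the same device (with a vector of Lagrange multipliers, one per extremal measure) reappears in the trinomial section.
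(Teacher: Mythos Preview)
Your argument is correct: the self-financing identity together with predictability of the holdings and the $\Q$-martingale property of discounted asset prices is exactly the standard route, and boundedness on a finite sample space handles integrability. The paper itself does not supply a proof at all --- it simply writes ``See \cite{IlamV}'' --- so your write-up is strictly more informative than what appears in the text, and there is nothing to compare against beyond noting that the cited source would contain essentially the same computation you carried out.
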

\begin{proof} 
    See \cite{IlamV}.
\end{proof}
\begin{theorem}
    Maximizing $\E[U(V_N)]$ over the set of self-financing portfolios $\Psi^v$ is equivalent to maximizing $\E[U(V_N)]$ subject to $\tE[U(V_N)]=v$, with $\tP$ being the unique equivalent martingale measure.
\end{theorem}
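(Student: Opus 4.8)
The plan is to show that the two optimization problems have the same feasible set once the second one is phrased in terms of the terminal wealth random variable. Concretely, I would introduce the set of attainable terminal wealths
\[
\mathcal{W}^v := \{\,V_N : \psi \in \Psi^v\,\}
\]
and prove that it coincides with the set of (where the utility requires it, positive) random variables $W$ on $\Omega$ satisfying the budget constraint $\tE\!\left[W/(1+r)^N\right] = v$ — this is the constraint meant by ``$\tE[U(V_N)]=v$'' in the statement. Once this identification is in hand, maximizing $\E[U(V_N)]$ over $\psi \in \Psi^v$ and maximizing $\E[U(W)]$ subject to the constraint are literally the same problem, because the objective depends on $\psi$ only through $V_N$.

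First I would prove the inclusion $\mathcal{W}^v \subseteq \{\,W : \tE[W/(1+r)^N]=v\,\}$. This is immediate from Theorem \ref{Vmg}: for any self-financing $\psi \in \Psi^v$, the discounted wealth process $\{V_n/(1+r)^n\}_{0 \le n \le N}$ is a martingale under the unique equivalent martingale measure $\tP$, so $\tE[V_N/(1+r)^N] = V_0 = v$. Hence every portfolio in $\Psi^v$ produces a terminal wealth obeying the constraint.

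Next I would prove the reverse inclusion, which is where completeness enters and which I expect to be the main obstacle. Given any $W$ with $\tE[W/(1+r)^N] = v$, I must construct a self-financing portfolio with initial wealth $v$ that replicates $W$ at time $N$. One can invoke the definition of a complete market directly (every claim is attainable; see \cite{book}), or build the strategy explicitly by backward induction using the invertible price matrices $D_{n+1}$ from Theorem \ref{weakval2}: set $\hat{V}_N = W$, define $\hat{V}_n = (1+r)^{-(N-n)}\tE[W \mid \vec{S}_n]$, and solve $\vec{\delta}_n = D_{n+1}^{-1}\hat{V}_{n+1}$ for the holdings at each step. The resulting strategy has initial value $\hat{V}_0 = (1+r)^{-N}\tE[W] = v$, so $W \in \mathcal{W}^v$. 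The delicate points are verifying that the strategy so constructed is genuinely self-financing — this is essentially Theorem \ref{Vmg} read backward, a finite linear-algebra check in this setting — and that the positivity of $W$, when the utility's domain demands it, is simply imposed as a restriction on which $W$ we admit, matching the fact that $\Psi^v$ is implicitly taken to consist of portfolios with positive terminal wealth for the log and power cases.

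Finally, combining the two inclusions gives $\mathcal{W}^v = \{\,W : \tE[W/(1+r)^N]=v\,\}$, hence
\[
\max_{\psi \in \Psi^v} \E[U(V_N)] \;=\; \max_{W:\, \tE[W/(1+r)^N] = v} \E[U(W)],
\]
which is the asserted equivalence. I would close with a remark that this reduction is exactly what legitimizes the Lagrangian/martingale-method computation carried out in the proof of Theorem \ref{weakval2}.
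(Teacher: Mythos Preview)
Your argument is correct and is the standard one: identify the feasible set of terminal wealths from self-financing strategies with the set of random variables satisfying the discounted budget constraint, using the martingale property of discounted wealth for one inclusion and market completeness (replication) for the other. You also rightly flag that the constraint written as ``$\tE[U(V_N)]=v$'' in the statement is a typo for $\tE[V_N/(1+r)^N]=v$.

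There is, however, nothing in the paper to compare against: the paper's own ``proof'' of this theorem is simply the line ``See \cite[Lemma 4.9]{maxthm}.'' So you have not taken a different route from the paper---you have supplied the argument that the paper outsources to a reference. What you wrote is almost certainly the content of that cited lemma specialized to the finite discrete setting, and it is exactly the reduction that justifies forming the Lagrangian in the proof of Theorem~\ref{weakval2}, as you note.
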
 

\begin{proof}
    See \cite[Lemma 4.9]{maxthm}.
\end{proof}

\begin{theorem}
    \[V_N=I\left(\frac{\lambda}{(1+r)^N}\frac{d\tP}{d\Q}\right)\] More specifically, optimal terminal wealth $\hat V_N$ is attained when $\lambda$ satisfies \[v = \tE\left[\frac{1}{(1+r)^N} I\left(\frac{\lambda}{(1+r)^N}\frac{d\tP}{d\Q}\right)\right].\]
\end{theorem}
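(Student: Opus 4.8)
The plan is to solve this as a finite-dimensional concave optimization problem with a single linear constraint. By the preceding theorem, maximizing $\E^\Q[U(V_N)]$ over the self-financing portfolios $\Psi^v$ is equivalent to maximizing it over all terminal wealths $V_N$ subject to the budget constraint $\tE[V_N/(1+r)^N]=v$; because the state space (hence $\mathcal{B}$) is finite, $V_N$ is just the vector $(V_N(b))_{b\in\mathcal{B}}$ and $\Q,\tP$ are probability vectors, so there are no measure-theoretic subtleties and the entire problem is finite-dimensional.

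First I would form the Lagrangian
\[\mathcal{L}(V_N,\lambda)=\sum_{b\in\mathcal{B}}\Q(b)\,U(V_N(b))-\lambda\left(\sum_{b\in\mathcal{B}}\frac{\tP(b)}{(1+r)^N}\,V_N(b)-v\right)\]
and set $\partial\mathcal{L}/\partial V_N(b)=0$ for each $b$, giving $\Q(b)U'(V_N(b))=\frac{\lambda}{(1+r)^N}\tP(b)$, i.e. $U'(V_N(b))=\frac{\lambda}{(1+r)^N}\frac{d\tP}{d\Q}(b)$. Since $U$ is strictly concave, $U'$ is a strictly decreasing bijection from the domain of $U$ onto its range; writing $I:=(U')^{-1}$ for the inverse of marginal utility and applying $I$ to both sides yields $V_N(b)=I\!\left(\frac{\lambda}{(1+r)^N}\frac{d\tP}{d\Q}(b)\right)$, which is the asserted form of $\hat V_N$. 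Substituting this expression back into the constraint gives exactly $v=\tE\!\left[\frac{1}{(1+r)^N}I\!\left(\frac{\lambda}{(1+r)^N}\frac{d\tP}{d\Q}\right)\right]$, the equation defining $\lambda=\lambda(v)$.

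It then remains to justify two things. First, the critical point is a global maximizer, by the standard Lagrangian sufficiency argument: for the multiplier $\lambda^\ast=\lambda(v)>0$ solving the budget equation, $V_N\mapsto\mathcal{L}(V_N,\lambda^\ast)$ is concave (as $U$ is concave) and separates over $b$, and hypothesis \eqref{derivlim} forces the coordinatewise maximizer to be interior and finite, so the first-order condition actually characterizes the maximizer of $\mathcal{L}(\cdot,\lambda^\ast)$; since $\mathcal{L}(\cdot,\lambda^\ast)$ agrees with $\E^\Q[U(\cdot)]$ on the feasible set and $\hat V_N$ is feasible, $\hat V_N$ maximizes $\E^\Q[U(V_N)]$. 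Second, existence and uniqueness of $\lambda$: the map $\lambda\mapsto\tE[\frac{1}{(1+r)^N}I(\frac{\lambda}{(1+r)^N}\frac{d\tP}{d\Q})]$ is continuous and strictly decreasing, running from $+\infty$ to $0$ as $\lambda$ ranges over $(0,\infty)$ by \eqref{derivlim}, so it meets each level $v>0$ exactly once. For the exponential utility, which violates \eqref{derivlim}, one instead has $I(y)=-\frac{1}{\alpha}\ln(y/\alpha)$ defined for all $y>0$ with range $\mathbb{R}$, and the same monotonicity argument applies on this range, so the statement still holds. The only genuine obstacle is this last bookkeeping — matching the range of $U'$ (equivalently the domain of $I$) to the positive numbers $\frac{\lambda}{(1+r)^N}\frac{d\tP}{d\Q}(b)$ and checking the budget map is onto the relevant interval — which is precisely where \eqref{derivlim}, or its exponential-utility substitute, does the work; the optimization itself is routine once finiteness and concavity are in hand.
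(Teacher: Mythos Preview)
Your argument is correct: the Lagrangian/first-order-condition derivation, the inversion via $I=(U')^{-1}$, the verification that the critical point is a global maximum by concavity and separability, and the existence/uniqueness of $\lambda$ from the Inada-type conditions are all sound in this finite state space.

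The paper does not actually prove this statement --- it simply cites an external reference (\cite{IlamV}, p.~16). What you have written is precisely the standard martingale-method computation that such a reference would contain, so there is nothing to compare; you have supplied the proof that the paper omits.
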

\begin{proof}
    See \cite{IlamV} p16.
\end{proof}

\bibliography{SummerWorksCited}
\bibliographystyle{plain}
\nocite{*}

\end{document}